\def \VersionForArXiV {}
	\newcommand{\LongVersion}[1]{#1}
	\newcommand{\ACMVersion}[1]{}
	\newcommand{\LongVersion}[1]{}
	\newcommand{\ACMVersion}[1]{#1}
\newenvironment{ienumeration}
	{\ifdefined\VersionForArXiV\begin{enumerate}\else\begin{inparaenum}[\itshape i\upshape)]\fi}
	{\ifdefined\VersionForArXiV\end{enumerate}\else\end{inparaenum}\fi}
\newenvironment{oneenumeration}
	{\ifdefined\VersionForArXiV\begin{enumerate}\else\begin{inparaenum}[1)]\fi}
	{\ifdefined\VersionForArXiV\end{enumerate}\else\end{inparaenum}\fi}
\footnotesize\printfield{doi}}
\definecolor{darkblue}{rgb}{0, 0, 0.7}
\crefname{line}{\text{line}}{\text{lines}} %
\definecolor{mygreen}{rgb}{0,0.6,0}
\definecolor{mygray}{rgb}{0.5,0.5,0.5}
\definecolor{mymauve}{rgb}{0.58,0,0.82}
\tiny\color{black}, %
\newcommand{\defProblem}[3]
{%
	\noindent\fcolorbox{black}{blue!15}{
	\begin{minipage}{.95\columnwidth}
		\textbf{#1 problem:}\\
		\textsc{Input}: #2\\
		\textsc{Problem}: #3
	\end{minipage}
}

	\smallskip

}
\pgfplotsset{compat=1.15}
\tikzstyle{pta}=[auto, ->, >=stealth']
\tikzstyle{every node}=[initial text=]
\tikzstyle{location}=[rectangle, rounded corners, minimum size=12pt, draw=black, fill=blue!10, inner sep=2pt]
\tikzstyle{invariant}=[draw=black, dotted, inner sep=1pt, node distance=0] %
\tikzstyle{final}=[double, fill=blue!50]
\tikzstyle{urgent}=[fill=yellow, thick, dotted] %
\tikzstyle{private}=[fill=red,thick]
\pgfplotsset{
	legend entry/.initial=,
	every axis plot post/.code={%
		\pgfkeysgetvalue{/pgfplots/legend entry}\tempValue
		\ifx\tempValue\empty
		\pgfkeysalso{/pgfplots/forget plot}%
		\else
		\expandafter\addlegendentry\expandafter{\tempValue}%
		\fi
	},
}
\tikzset{
	background/.style={
		draw,
		fill=gray!30,
		rounded corners,
		inner sep=.3cm
	},
	backgroundForAll/.style={
		draw,
		fill=purple!30,
		rounded corners,
		inner sep=.1cm,
		prefix after command= {\pgfextra{\tikzset{every label/.style={fill=purple!30,draw=purple}}}}
	}
}
\definecolor{coloract}{rgb}{0.50, 0.70, 0.30}
\definecolor{colorclock}{rgb}{0.4, 0.4, 1}
\definecolor{colordisc}{rgb}{1, 0, 1}
\definecolor{colorloc}{rgb}{0.4, 0.4, 0.65}
\definecolor{colorparam}{rgb}{1, 0.6, 0.0}
\definecolor{loccolor1}{rgb}{1, 0.3, 0.3}
\definecolor{loccolor2}{rgb}{0.3, 1, 0.3}
\definecolor{loccolor3}{rgb}{0.3, 0.3, 1}
\definecolor{loccolor4}{rgb}{1, 0.3, 1}
\definecolor{loccolor5}{rgb}{1, 1, 0.3}
\definecolor{loccolor6}{rgb}{0.3, 1, 1}
\definecolor{loccolor7}{rgb}{0.9, 0.6, 0.2}
\definecolor{loccolor8}{rgb}{0.7, 0.4, 1}
\definecolor{loccolor9}{rgb}{0.5, 1, 0.75}
\definecolor{loccolor10}{rgb}{0.8, 0.7, 0.6}
\definecolor{loccolor11}{rgb}{0.6, 0.7, 0.8}
\definecolor{loccolor12}{rgb}{0.2, 0.5, 0.9}
\definecolor{loccolor13}{rgb}{0.5, 0.9, 0.2}
\definecolor{loccolor14}{rgb}{0.9, 0.2, 0.5}
\definecolor{loccolor15}{rgb}{0.7, 0.7, 0.7}
\definecolor{loccolor16}{rgb}{0.8, 0.8, 0.5}
\newcommand{\styleact}[1]{\ensuremath{\textcolor{coloract}{{\mathit{#1}}}}}
\newcommand{\styleclock}[1]{\ensuremath{\textcolor{colorclock}{{#1}}}}
\newcommand{\styledisc}[1]{\ensuremath{\textcolor{colordisc}{\mathrm{#1}}}}
\newcommand{\clockx}{\ensuremath{\styleclock{x}}}
\newcommand{\clocky}{\ensuremath{\styleclock{y}}}
\newcommand{\stylecode}[1]{\textcolor{colorloc}{\texttt{#1}}}
\newcommand{\ATMaskPassword}{\ensuremath{\styleact{askPwd}}}
\newcommand{\ATMcorrectAmount}{\ensuremath{\styleact{correctAmount}}}
\newcommand{\ATMcorrectPassword}{\ensuremath{\styleact{correctPwd}}}
\newcommand{\ATMrequestBalance}{\ensuremath{\styleact{reqBalance}}}
\newcommand{\ATMfinish}{\ensuremath{\styleact{finish}}}
\newcommand{\ATMincorrectAmount}{\ensuremath{\styleact{incorrectAmount}}}
\newcommand{\ATMincorrectPassword}{\ensuremath{\styleact{incorrectPwd}}}
\newcommand{\ATMnormalWithdrawal}{\ensuremath{\styleact{normalWithdraw}}}
\newcommand{\ATMpressFinish}{\ensuremath{\styleact{pressFinish}}}
\newcommand{\ATMpressOK}{\ensuremath{\styleact{pressOK}}}
\newcommand{\ATMquickWithdrawal}{\ensuremath{\styleact{quickW\LongVersion{ith}draw}}}
\newcommand{\ATMrestart}{\ensuremath{\styleact{restart}}}
\newcommand{\ATMstart}{\ensuremath{\styleact{start}}}
\newcommand{\ATMtakeCash}{\ensuremath{\styleact{\textcolor{red}{takeCash}}}}
\newcommand{\rowHeader}{\rowcolor{blue!20}\bfseries}
\newcommand{\cellYes}{\cellcolor{green!20}\textbf{$\surd$}}
\newcommand{\init}{_0}
\newcommand{\A}{\ensuremath{\mathcal{A}}}
\newcommand{\TA}{\A}
\newcommand{\Actions}{\ensuremath{\Sigma}}
\newcommand{\ControllableActions}{\ensuremath{\Sigma_c}}
\newcommand{\UncontrollableActions}{\ensuremath{\Sigma_u}}
\newcommand{\action}{\ensuremath{a}}
\newcommand{\assign}{\leftarrow}
\newcommand{\Clock}{\mathbb{X}} %
\newcommand{\ClockCard}{H} %
\newcommand{\clock}{x} %
\newcommand{\clockval}{\mu} %
\newcommand{\ClocksZero}{\vec{0}}
\newcommand{\compOp}{\bowtie}
\newcommand{\Control}{\ensuremath{\mathit{Control}}}
\newcommand{\duration}{\ensuremath{\mathit{dur}}}
\newcommand{\edge}{e}
\newcommand{\Edges}{E}
\newcommand{\longuefleche}[1]{\stackrel{#1}{\longrightarrow}}
\newcommand{\longueflecheRel}[1]{\stackrel{#1}{\mapsto}}
\newcommand{\flecheRel}{{\rightarrow}}
\newcommand{\guard}{g}
\newcommand{\invariant}{I}
\newcommand{\loc}{\ensuremath{\ell}} %
\newcommand{\locinit}{\loc\init}
\newcommand{\Loc}{L} %
\newcommand{\locfinal}{\ensuremath{\loc_f}}
\newcommand{\locpriv}{\ensuremath{\loc_{\mathit{priv}}}}
\newcommand{\sinit}{s\init} %
\newcommand{\strategy}{\ensuremath{\sigma}}
\newcommand{\strategies}{\ensuremath{\mathcal{S}}}
\newcommand{\concstate}{\ensuremath{s}} %
\newcommand{\States}{S} %
\newcommand{\TTS}{\ensuremath{T}}
\newcommand{\varrun}{\rho} %
\newcommand{\setN}{\ensuremath{\mathbb N}}
\newcommand{\setR}{\ensuremath{\mathbb R}}
\newcommand{\setRgeqzero}{\ensuremath{\setR_{\geq 0}}}
\newcommand{\setZ}{\ensuremath{\mathbb Z}}
\newcommand{\PrivDurReach}[1]{\ensuremath{\mathit{DReach}^\mathit{priv}(#1)}}
\newcommand{\PubDurReach}[1]{\ensuremath{\mathit{DReach}^{\neg \mathit{priv}}(#1)}}
\newcommand{\PrivReach}[3]{\ensuremath{\mathit{Reach}^{#1}_{#2}(#3)}}
\newcommand{\PubReach}[3]{\ensuremath{\mathit{Reach}^{#1}_{\neg #2}(#3)}}
\newcommand{\set}[1]{\ensuremath{\left\{ #1 \right\}}}
\newcommand{\resets}{R}
\newcommand{\reset}[2]{\ensuremath{[#1]_{#2}}}
\newcommand{\stylealgo}[1]{\ensuremath{\textsf{#1}}}
\newcommand{\synthControl}{\stylealgo{synthCtrl}}
\newcommand{\synthMinControl}{\stylealgo{synthMinCtrl}}
\newcommand{\synthMaxControl}{\stylealgo{synthMaxCtrl}}
\newcommand{\witnessMinControl}{\stylealgo{witnessMinCtrl}}
\newcommand{\witnessMaxControl}{\stylealgo{witnessMaxCtrl}}
\newcommand{\ComplexityFont}[1]{{\sffamily\upshape #1}}
\newcommand{\TEXPTIME}{\ComplexityFont{3EXPTIME}\xspace}
\newcommand{\PSPACE}{\ComplexityFont{PSPACE}\xspace}
	\theoremstyle{plain}
	\newtheorem{proposition}{Proposition}
	\theoremstyle{definition}
	\newtheorem{definition}{Definition}
	\newtheorem{example}{Example}
	\theoremstyle{remark}
\newcommand{\strategFTO}{\textsf{strategFTO}}
\newcommand{\imitator}{\textsf{IMITATOR}}
\newcommand{\PolyOp}{\textsc{PolyOp}}
 	\definecolor{colorok}{RGB}{80,80,150}
	\definecolor{colorok}{RGB}{0,0,0}
\newcommand{\eg}{\textcolor{colorok}{e.\,g.,}\xspace}
\newcommand{\ie}{\textcolor{colorok}{i.\,e.,}\xspace}
\newcommand{\st}{\textcolor{colorok}{s.t.}\xspace}
\newcommand{\wrt}{{w.r.t.}\xspace} %
\newcommand{\orcidID}[1]{\href{https://orcid.org/#1}{\includegraphics[width=1em]{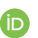}}}
\newcommand{\ourtitle}{\strategFTO{}: Untimed control for timed opacity}
\newcommand{\ouracks}{%
	This work is partially supported by the ANR-NRF French-Singaporean research program \href{https://www.loria.science/ProMiS/}{ProMiS} (ANR-19-CE25-0015 / 2019 ANR NRF 0092)
	and
	the ANR research program BisoUS. %
	
	Experiments presented in this paper were carried out using the Grid'5000 testbed, supported by a scientific interest group hosted by Inria and including CNRS, RENATER and several universities as well as other organizations (see \url{https://www.grid5000.fr}).
}
\newcommand{\ourkeywords}{opacity, timing leak, timed automata, security, control, \imitator{}}
\newcommand{\ourabstract}{%
	We introduce a prototype tool \strategFTO{} addressing the verification of a security property in critical software.
	We consider a recent definition of timed opacity where an attacker aims to deduce some secret while having access only to the total execution time.
	The system, here modelled by timed automata, is deemed opaque if
	for any execution time, there are either no corresponding runs, or both public and
	private corresponding runs.
	We focus on the untimed control problem:
	exhibiting a controller, \ie{} a set of allowed actions, such that the system restricted to those actions is fully timed-opaque.
	We first show that this problem is not more complex than the full timed opacity problem, and then we propose an algorithm, implemented and evaluated in practice.
}
\begin{document}

\ifdefined\VersionForArXiV\else

	\title{\ourtitle{}}

	\author{Étienne André}
	\orcid{0000-0001-8473-9555}
	\affiliation{%
	\institution{Université Sorbonne Paris Nord, LIPN, CNRS UMR 7030%
}
	\city{Villetaneuse}
	\country{France}
	\postcode{93430}
	}

	\author{Shapagat Bolat}
	\orcid{0000-0002-6035-5319}
	\author{Engel Lefaucheux}
	\orcid{0000-0003-0875-300X}

	\author{Dylan Marinho}
	\orcid{0000-0002-2548-6196}

	\affiliation{%
	\institution{Université de Lorraine, CNRS, Inria, LORIA}
	\city{Nancy}
	\country{France}
	\postcode{54\,000}
	}

\begin{abstract}
	\ourabstract{}
\end{abstract}

\begin{CCSXML}
<ccs2012>
   <concept>
       <concept_id>10002978.10002986.10002990</concept_id>
       <concept_desc>Security and privacy~Logic and verification</concept_desc>
       <concept_significance>500</concept_significance>
       </concept>
   <concept>
       <concept_id>10003752.10003766.10003773.10003775</concept_id>
       <concept_desc>Theory of computation~Quantitative automata</concept_desc>
       <concept_significance>300</concept_significance>
       </concept>
   <concept>
       <concept_id>10003752.10003790.10011192</concept_id>
       <concept_desc>Theory of computation~Verification by model checking</concept_desc>
       <concept_significance>300</concept_significance>
       </concept>
 </ccs2012>
\end{CCSXML}

\ccsdesc[500]{Security and privacy~Logic and verification}
\ccsdesc[300]{Theory of computation~Quantitative automata}
\ccsdesc[300]{Theory of computation~Verification by model checking}

\keywords{\ourkeywords{}.}
\maketitle

\fi
\ifdefined\VersionForArXiV

	\title{\textbf{\ourtitle{}}%
		\footnote{%
			This manuscript is the author (and slightly extended) version of the manuscript of the same name published in the proceedings of the 8th International Workshop on Formal Techniques for Safety-Critical Systems (\href{https://2022.splashcon.org/home/ftscs-2022}{FTSCS 2022}).
			The final authenticated version is available at \href{https://dl.acm.org/}{\nolinkurl{dl.acm.org}}.
			\ouracks{}
		}
	}

	\author{}
	\date{}

	\makeatletter
	\g@addto@macro\@maketitle{
	\begin{center}
	Étienne André$^{1,2}$\orcidID{0000-0001-8473-9555}, Shapagat Bolat$^{2}$\orcidID{0000-0002-6035-5319}, Engel Lefaucheux$^{2}$\orcidID{0000-0003-0875-300X}, and Dylan Marinho$^{2}$\orcidID{0000-0002-2548-6196}
	\end{center}

	\bigskip

	\noindent{\footnotesize
		$^1$LIPN, CNRS UMR 7030, Université Sorbonne Paris Nord\\
		$^2$Université de Lorraine, CNRS, Inria, LORIA, Nancy, France\\
	}

	\bigskip

	\newcommand{\keywords}[1]
	{%
		\small\textbf{\textit{Keywords---}} #1
	}

	\begin{abstract}
		\ourabstract{}
	\end{abstract}

	\keywords{\ourkeywords{}}

	\bigskip
	\bigskip
	\bigskip
	}
	\makeatother

	\thispagestyle{plain}

	\maketitle

\fi

\ifdefined \VersionWithComments
	\textcolor{red}{\textbf{This is the version with comments. To disable comments, comment out line~3 in the \LaTeX{} source.}}
\fi

\section{Introduction}\label{section:introduction}
We address here the control of timed systems to avoid timing leaks, \ie{} the leakage of private information that can be deduced from time.
We use as underlying model timed automata (TAs)~\cite{AD94}, an extension of finite-state automata with real-valued clocks.%
\LongVersion{%

\paragraph{Context}}
Opacity is a key security property requiring that
an external user should not be able to deduce whether the execution of a system
contains a secret behavior through its observation.
This property was first formalized for labeled transition systems~\cite{BKMR08}, by specifying a subset of secret paths and requiring that, for any secret path, there is a non-secret one with the same observation. 
Opacity raises challenging research issues such as
\begin{oneenumeration}%
	\item specifying formally opacity in various frameworks~\cite{HughesS04,BKMR08}, 
	\item verifying opacity properties~\cite{Mazare05,BKMR08}, and
	\item developing mechanisms to design a system satisfying opacity while preserving functionality and performance~\cite{BGIRSVY12,BHL18}.
\end{oneenumeration}

Franck Cassez proposed in~\cite{Cassez09} a first definition of \emph{timed opacity} asking whether an attacker can deduce a secret by observing a set of observable actions together with their timestamp.
He proved that opacity is undecidable for TAs%
, mainly from the undecidability of the language inclusion problem for TAs~\cite{AD94}%
.
\LongVersion{The opacity problem is also undecidable for the restricted subclass of event-recording automata~\cite{AFH99}.
}%
Based on this definition of opacity, some decidable subclasses were proposed, for real-time automata~\cite{WZ17,WZA18} (a severely restricted subclass of TAs with a single clock), or over bounded-time~\cite{AEYM21}.

In~\cite{ALMS23}, we proposed a definition of opacity where the attacker only has access (in addition to the model knowledge) to the system \emph{execution time}, \ie{} the time from the initial location to a given location.
The timed opacity problem therefore asks ``for which execution times is the attacker unable to deduce whether a private location was visited?''
The \emph{full} timed opacity problem asks whether the system is timed-opaque for all execution times, \ie{} the attacker is never able to deduce whether the private location was visited by an execution.
We proved in~\cite{ALMS23} that this latter problem is decidable (in \TEXPTIME), and we proposed a practical algorithm using a parametric version of TAs~\cite{AHV93}, implemented in \imitator{}~\cite{Andre21}.

\paragraph{Contribution}
If a system is not fully timed-opaque, there may be ways to tune it to enforce opacity.
For instance, one could change internal delays, or add some \stylecode{sleep()} or \stylecode{Wait()} statements in the program (see \eg{} \cite{ALMS23}).
In this paper, we consider a static (untimed) form of control of the system.
This indicates whether there is a way of restricting the behavior of users to ensure full timed opacity.
With that mindset, we assume the set of actions of the TA is partitioned into a set of \emph{controllable} actions (that can be disabled) and a set of \emph{uncontrollable} actions (that cannot be disabled).
We address the following goal: exhibit a controller (\ie{} a subset of the system controllable actions to be kept in addition to the uncontrollable actions, while other controllable actions are disabled) guaranteeing the system to be fully timed-opaque.
We propose an algorithm exhibiting a set of controllers ensuring opacity, implemented into a tool \href{https://github.com/DylanMarinho/Controlling-TA}{\strategFTO{}}, calling \imitator{}~\cite{Andre21} for computing suitable opaque execution times, and \PolyOp{}~\cite{BHZ08} for additional polyhedra operations.

\paragraph{Related works}
It is well known that observing the time taken by a system to finish some operation is a
potential way to get information out of it (see \eg{} \cite{Kocher96}). As such,
identifying which information is released by the timing of a system has been studied
both from a security and a safety perspective.

From the security point of view, beyond the works related to timed opacity and TAs~\cite{Cassez09,WZ17,WZA18,AEYM21,ALMS23}, the notion of non-interference has been
widely studied. 
	A first definition of timed non-interference was proposed for TAs in~\cite{BDST02,BT03}. This notion is extended to PTAs in~\cite{AK20}, with a semi-algorithm implemented using \imitator{}~\cite{Andre21}.
In~\cite{GMR07}, another notion of timed interference called
timed strong non-deterministic non-interference (SNNI) which was based on timed language equivalence between the automaton with hidden low-level actions and the automaton with removed low-level actions was developed.
This notion is in some aspects stronger than the opacity notion we consider, and is undecidable.
SNNI was adapted in~\cite{VNN18} to allow some intentional information leakage and
a form of control aimed at ensuring it was presented in~\cite{BCLR15}.
Their framework gives to the attacker more information than the total execution time, and their control differs from ours to include that knowledge.

The \emph{diagnosis} of TAs is one of the dominant research directions aimed at analysing 
information leakage from a safety perspective. Its goal is to detect, by observing the 
system, whether some faulty behavior occurred. As such, it is some form of dual to
opacity.
Diagnosis was first introduced for TAs in~\cite{Tripakis02}. Diagnosability of a system
is shown there to be decidable, though the actual diagnoser may be quite complex 
(see~\cite{BCD05} for subclasses of TAs allowing simpler diagnoser, see also~\cite{CT09} for a summary of the main results on the diagnosis of TAs and~\cite{Cassez10} for a
diagnosability focused control of TAs).

\section{Preliminaries}\label{section:preliminaries}

We assume a set~$\Clock = \{ \clock_1, \dots, \clock_\ClockCard \} $ of \emph{clocks}, \ie{} real-valued variables that all evolve over time at the same rate.
A clock valuation is a function
$\clockval : \Clock \rightarrow \setRgeqzero$.
We write $\ClocksZero$ for the clock valuation assigning $0$ to all clocks.
Given $d \in \setRgeqzero$, $\clockval + d$ denotes the valuation \st{} $(\clockval + d)(\clock) = \clockval(\clock) + d$, for all $\clock \in \Clock$.
Given $\resets \subseteq \Clock$, we define the \emph{reset} of a valuation~$\clockval$, denoted by $\reset{\clockval}{\resets}$, as follows: $\reset{\clockval}{\resets}(\clock) = 0$ if $\clock \in \resets$, and $\reset{\clockval}{\resets}(\clock)=\clockval(\clock)$ otherwise.

A clock guard~$\guard$ is a constraint over $\Clock$ defined by a conjunction of inequalities of the form
$\clock \compOp d$\label{def:clockguards}, with
	$ d \in \setZ$ and ${\compOp} \in \{<, \leq, =, \geq, >\}$.
Given~$\guard$, we write~$\clockval\models\guard$ if %
the expression obtained by replacing each~$\clock$ with~$\clockval(\clock)$ in~$\guard$ evaluates to true.

\begin{definition}[TA~\cite{AD94}]\label{def:TA}
	A TA $\TA$ is a tuple \mbox{$\TA = (\Actions, \Loc, \locinit, \locpriv, \locfinal, \Clock, \invariant, \Edges)$}, where:
	\begin{ienumeration}
		\item $\Actions$ is a finite set of actions,
		\item $\Loc$ is a finite set of locations,
		\item $\locinit \in \Loc$ is the initial location,
		\item $\locpriv \in \Loc$ is the private location,
		\item $\locfinal \in \Loc$ is the final location,
		\item $\Clock$ is a finite set of clocks,
		\item $\invariant$ is the invariant, assigning to every $\loc\in \Loc$ a clock guard $\invariant(\loc)$,
		\item $\Edges$ is a finite set of edges  $\edge = (\loc,\guard,\action,\resets,\loc')$
		where~$\loc,\loc'\in \Loc$ are the source and target locations, $\action \in \Actions$, $\resets\subseteq \Clock$ is a set of clocks to be reset, and $\guard$ is a clock guard.
	\end{ienumeration}
\end{definition}
\ifdefined\VersionForArXiV
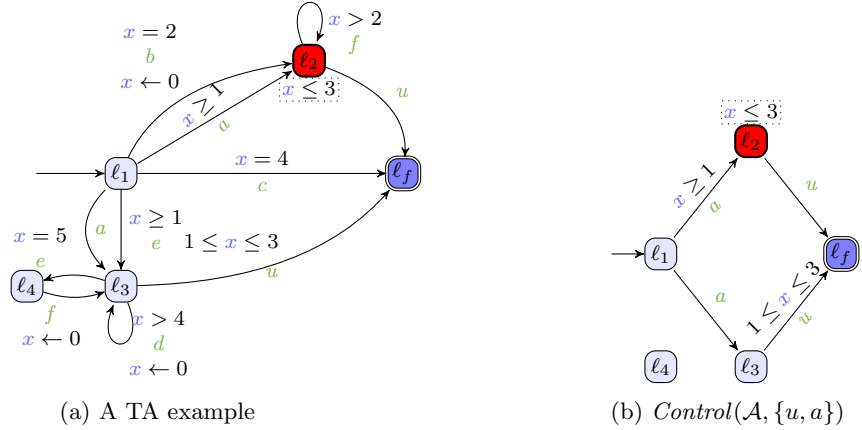
\begin{figure*}[tb]
\else
\begin{figure}[tb]
\fi
\centering
	\begin{subfigure}[b]{0.5\columnwidth}
	\centering
	\footnotesize

		\begin{tikzpicture}[pta, scale=1, xscale=2.5, yscale=1.5]

			\node[location, initial] at (0, 0) (l1) {$\loc_1$};

			\node[location, private] at (1, 1) (l2) {$\loc_2$};

			\node[location, final] at (1.5, 0) (lF) {$\locfinal$};

			\node[location] at (0, -1) (l3) {$\loc_3$};

			\node[location] at (-.5, -1) (l4) {$\loc_4$};

			\node[invariant, below=of l2] {$\clockx \leq 3$};

			\path (l1) edge[] node[sloped,above]{$\clockx \geq 1$} node[below, sloped] {$\styleact{a}$} (l2);
			\path (l1) edge[bend left] node[align=center]{$\clockx = 2$\\$\styleact{b}$\\$\clockx \assign 0$} (l2);
			\path (l1) edge[] node[above]{$\clockx = 4$} node[below]{$\styleact{c}$} (lF);
			\path (l1) edge[] node[align=center]{$\clockx \geq 1$\\$\styleact{e}$} (l3);
			\path (l1) edge[bend right] node[align=center]{$\styleact{a}$} (l3);
			
			\path (l2) edge[loop above] node[below right, xshift=.5em, align=center]{$\clockx > 2$\\$\styleact{f}$} (l2);
			\path (l2) edge[bend left] node[align=center]{$\styleact{u}$} (lF);
			
			\path (l3) edge[bend right] node[above left]{$1 \leq \clockx \leq 3$} node[below left]{$\styleact{u}$} (lF);
			\path (l3) edge[loop below] node[right, align=center]{$\clockx > 4$\\$\styleact{d}$\\$\clockx \assign 0$} (l3);
			\path (l3) edge[bend right] node[above, xshift=-1.5em, align=center]{$\clockx = 5$\\$\styleact{e}$} (l4);
			
			\path (l4) edge[bend right] node[below, xshift=-1em, align=center]{$\styleact{f}$\\$\clockx \assign 0$} (l3);
		\end{tikzpicture}
		\caption{A TA example}
		\label{figure:running-example:TA}
	\end{subfigure}
	\ifdefined\VersionForArXiV
		\hspace{10em}
	\else
	\fi
	\begin{subfigure}[b]{0.45\columnwidth}
	\centering
	\footnotesize

		\begin{tikzpicture}[pta, scale=1, xscale=1.2, yscale=1.5]

			\node[location, initial] at (0, 0) (l1) {$\loc_1$};

			\node[location, private] at (1, 1) (l2) {$\loc_2$};

			\node[location, final] at (2, 0) (lF) {$\locfinal$};

			\node[location] at (1, -1) (l3) {$\loc_3$};

			\node[location] at (0, -1) (l4) {$\loc_4$};

			\node[invariant, above=of l2] {$\clockx \leq 3$};

			\path (l1) edge[] node[sloped,above]{$\clockx \geq 1$} node[below, sloped] {$\styleact{a}$} (l2);
			\path (l1) edge[] node[align=center]{$\styleact{a}$} (l3);
			
			\path (l2) edge[] node[align=center]{$\styleact{u}$} (lF);
			
			\path (l3) edge[] node[sloped, above]{$1 \leq \clockx \leq 3$} node[sloped,below]{$\styleact{u}$} (lF);
		\end{tikzpicture}
		\caption{$\Control(\A, \{ u,a \})$}
		\label{figure:running-example:control}
	\end{subfigure}
	
	\caption{Running example}

\ifdefined\VersionForArXiV
\end{figure*}
\else
\end{figure}
\fi
\begin{example}
	Consider the TA in \cref{figure:running-example:TA}, using one clock~$\clock$.
	$\loc_1$ is the initial location, while we assume that~$\locfinal$ is the  \emph{final} location, \ie{} a location in which an attacker can measure the execution time from the initial location.
	$\loc_2$ is the private location, \ie{} a secret to be preserved: the attacker should not be able to deduce whether it was visited or not.
	$\loc_2$ has an invariant $\clockx \leq 3$ (boxed); other locations invariants are true.
\end{example}
\begin{definition}[Semantics of a TA~\cite{AD94}]\label{def:semantics:TA}
	Given a TA $\TA = (\Actions, \Loc, \locinit, \locpriv, \locfinal, \Clock, \invariant, \Edges)$,
	the semantics of $\TA$ is given by the timed transition system (TTS) %
		$\TTS_{\TA}=(\States, \sinit, \flecheRel)$, with
	\begin{itemize}
		\item $\States = \{ (\loc, \clockval) \in \Loc \times \setRgeqzero^\ClockCard \mid \clockval \models \invariant(\loc) \}$, %
		$\sinit = (\locinit, \ClocksZero) $,
		\item  $\flecheRel$ consists of the discrete and (continuous) delay transition relations:
		\begin{ienumeration}
			\item discrete transitions: $(\loc,\clockval) \longueflecheRel{\edge} (\loc',\clockval')$, %
				if $(\loc, \clockval) , (\loc',\clockval') \in \States$, and there exists ${\edge = (\loc,\guard,\action,\resets,\loc') \in \Edges}$, such that $\clockval'= \reset{\clockval}{\resets}\models \invariant(\loc')$, and $\clockval\models\guard$.
			\item delay transitions: $(\loc,\clockval) \longueflecheRel{d} (\loc, \clockval+d)$, with $d \in \setRgeqzero$, if $\forall d' \in [0, d], (\loc, \clockval+d') \in \States$.
		\end{ienumeration}
	\end{itemize}
\end{definition}

    Moreover we write $(\loc, \clockval)\longuefleche{(d, \edge)} (\loc',\clockval')$ for a combination of a delay and discrete transition if
		$\exists  \clockval'' :  (\loc,\clockval) \longueflecheRel{d} (\loc,\clockval'') \longueflecheRel{\edge} (\loc',\clockval')$.
	\LongVersion{%

}Given a TA~$\TA$ with semantics $(\States, \sinit, \flecheRel)$,
a \emph{run} of~$\TA$ is an alternating sequence of states of $\TTS_{\TA}$  and pairs of delays and edges starting from the initial state~$\sinit$ of the form
$\concstate_0, (d_0, \edge_0), \concstate_1, \cdots$
where
for all $i$, $\edge_i \in \Edges$, $d_i \in \setRgeqzero$ and
	$\concstate_i \longuefleche{(d_i, \edge_i)} \concstate_{i+1}$.
The \emph{duration} of a finite run $\varrun : \concstate_0, (d_0, \edge_0), \concstate_1, \cdots, (d_{i-1}, \edge_{i-1}), (\loc_i, \clockval_i)$ is $\duration(\varrun) = \sum_{0 \leq j \leq i-1} d_j$.
\subsection{Timed opacity definitions}\label{ss:problems}

We recall here the notion of timed opacity defined in~\cite{ALMS23}.\footnote{%
	We slightly modify the definitions from~\cite{ALMS23} by incorporating $\locpriv$ within the definition of~$\A$, and removing $\locpriv$ and~$\locfinal$ from the definition of~$\PrivDurReach{\A}$ and~$\PubDurReach{\A}$ to simplify the reading.
}

Given $\TA = (\Actions, \Loc, \locinit, \locpriv, \locfinal, \Clock, \invariant, \Edges)$,
and a run~$\varrun$, we say that $\locpriv$ is reached on the way to~$\locfinal$ in~$\varrun$ if $\varrun$ is of the form
$(\loc_0, \clockval_0), (d_0, \edge_0), (\loc_1, \clockval_1), \cdots, \ifdefined\VersionForArXiV\linebreak\fi(\loc_m, \clockval_m), (d_m, \edge_m), \cdots (\loc_n, \clockval_n)$ %
	for some~$m,n \in \setN$ such that $\loc_m = \locpriv$, $\loc_n = \locfinal$ and $\forall 0 \leq i \leq m-1, \loc_i \neq \locfinal$.
We denote by $\PrivReach{\TA}{\locpriv}{\locfinal}$ the set of those runs, and refer to them as \emph{private} runs.
Conversely, we say that
$\locpriv$ is avoided on the way to~$\locfinal$ in~$\varrun$ 
if $\varrun$ is of the form $(\loc_0, \clockval_0), (d_0, \edge_0), (\loc_1, \clockval_1), \cdots, (\loc_n, \clockval_n )$ %
with $\loc_n = \locfinal$ and $\forall 0 \leq i < n, \loc_i \not \in \{\locpriv,\locfinal\}$.
We denote the set of those runs by~$\PubReach{\TA}{\locpriv}{\locfinal}$, and refer to them as \emph{public} runs.

While we model the secret behavior of the system using a private location $\locpriv$
here, note that one could easily adapt these definitions if the secret is, for example, a set of 
locations, an action (this will be the case in our case study) or the value of a variable.

$\PrivDurReach{\TA}$ (resp.\ $\PubDurReach{\TA}$) is the set of all the durations of the runs for which $\locpriv$ is reached (resp.\ avoided) on the way to~$\locfinal$.
Formally:
\(\PrivDurReach{\TA} = \{ d \in \setRgeqzero \mid \exists \varrun \in \PrivReach{\TA}{\locpriv}{\locfinal}\text{ such that } d = \duration(\varrun)\}\)
and
\(\PubDurReach{\TA} = \{ d \in \setRgeqzero \mid \exists \varrun \in \PubReach{\TA}{\locpriv}{\locfinal} \text{ such that } d = \duration(\varrun)\}\text{.}\)

\begin{definition}[full timed opacity]\label{definition:opacity}
	Given a TA~$\TA$,
	we say that $\TA$ is \emph{fully timed-opaque}
	if $\PrivDurReach{\TA} = \PubDurReach{\TA}$.
\end{definition}

That is, a system is fully timed-opaque if, for any execution time~$d$, there exists a run of 
duration~$d$ that reaches~$\locfinal$ after going through~$\locpriv$ iff there exists 
another run of duration~$d$ that reaches~$\locfinal$ without going through~$\locpriv$.
Hence, the attacker cannot deduce from the execution time whether~$\locpriv$ was visited or not.

\begin{example}\label{example:running:DurReach}
	Consider again the TA in \cref{figure:running-example:TA}.
	Recall that $\loc_2$ is the private location.
	We have $\PrivDurReach{\TA} = [1, 5]$
	and
	$\PubDurReach{\TA} = [1, 3] \cup [4,4] \cup (5, +\infty) $.
	Since $\PrivDurReach{\TA} \neq \PubDurReach{\TA}$, the system is \emph{not} fully timed-opaque.
\end{example}
\section{Untimed control for full timed opacity}\label{section:control}

In this section, we introduce an untimed control for controlling timed opacity.
We assume $\Actions = \ControllableActions \uplus \UncontrollableActions$ where $\ControllableActions$ (resp.~$\UncontrollableActions$) denote controllable (resp.\ uncontrollable) actions.

A (static, untimed) \emph{strategy} of a TA~$\A$ is a set of actions $\strategy \subseteq \Actions$ that contains at least all uncontrollable actions (\ie{} $\UncontrollableActions \subseteq \strategy \subseteq \Actions$).
A strategy induces a restriction of~$\A$ where only the edges labeled by actions 
of~$\strategy$ are allowed:

\begin{definition}[Controlled TA]\label{def:Control}	
	Given \mbox{$\A = (\Actions, \Loc, \locinit, \locpriv, \locfinal,  \Clock, \invariant, \Edges)$} with $\Actions = {\UncontrollableActions \uplus \ControllableActions}$
	and a strategy~$\strategy \subseteq \Actions$,
	the \emph{control} of~$\A$ using~$\strategy$ is the TA 
	\mbox{$\A' = \Control(\A, \strategy) = (\strategy,  \Loc, \locinit, \locpriv, \locfinal, \Clock, \invariant, \Edges')$} where
$\Edges' = \{ (\loc,\guard,\action,\resets,\loc') \in \Edges \mid \action \in \strategy \}$.
\end{definition}
\begin{example}
	Consider again the TA~$\A$ in \cref{figure:running-example:TA}.
	Fix $\strategy = \{u,a\}$.
	Then $\Control(\A, \strategy)$ is in \cref{figure:running-example:control}. %
\end{example}

Strategies represent some modifications of the system that can be implemented to ensure full timed opacity.

\begin{definition}[fully timed-opaque strategy]
	A strategy $\strategy$ is \emph{fully timed-opaque} if $\Control(\A, \strategy)$ is fully timed-opaque.
\end{definition}

A strategy (even a maximal one) might achieve full timed opacity by blocking all runs (both private or public) from reaching the target.
If reaching the target means completing a task, this might not be something one would desire.
We call a strategy allowing to reach the target for at least some durations an \emph{effective} strategy.

We define two slightly different problems:
taking a TA~$\A$ as input, the \textbf{full timed} (resp.\ \textbf{effective full time}) \textbf{opacity control emptiness problem} asks whether the set of fully (resp.\ effective fully) timed-opaque strategies for~$\A$ is empty.

\LongVersion{
\smallskip

\defProblem
	{Full timed opacity control emptiness}
	{A TA~$\A$}
	{is the set of fully timed-opaque strategies for~$\A$ empty?}

\smallskip

\defProblem
	{Effective full timed opacity control emptiness}
	{A TA~$\A$}
	{is the set of effective fully timed-opaque strategies for~$\A$ empty?}
}

Note that, due to the presence of uncontrollable actions, the first problem (full timed opacity control emptiness) is not trivial.
(If uncontrollable actions were not part of our definitions, choosing $\strategy = \emptyset$ would always yield an acceptable fully timed-opaque strategy.)

We will also refine those problems by considering a notion of \emph{maximal} (\ie{} most permissive) strategy \wrt{} full timed opacity based on the number of actions belonging to the strategy:
given~$\A$, a fully timed-opaque strategy~$\strategy$ is maximal if $\forall \strategy'$, if $\strategy'$ is fully timed-opaque then $|\strategy'| \leq |\strategy|$.
We define similarly \emph{minimal} strategies (least permissive, \ie{} disabling as many actions as possible) as well as maximal (resp.\ minimal) effective fully timed-opaque strategies, \ie{} the set of largest (resp.\ smallest) effective fully timed-opaque strategies.

\begin{example}\label{example:running:strategy}
	Consider again the TA~$\A$ in \cref{figure:running-example:TA}.
	Assume $\UncontrollableActions = \{ u \}$ and $\ControllableActions = \{a,b,c,d,e,f\}$.
	Fix $\strategy_1 = \{ u, b, c \}$.
	We have $\PrivDurReach{\Control(\A, \strategy_1)} = [2,5]$
	while
	$\PubDurReach{\Control(\A, \strategy_1)} = [4,4]$;
	therefore, $\strategy_1$ is not fully timed-opaque. 
	Now fix $\strategy_2 = \{ u, a, f \}$.
	We have $\PrivDurReach{\Control(\A, \strategy_2)} = \PubDurReach{\Control(\A, \strategy_2)} = [1,3]$;
	therefore, $\strategy_2$ is fully timed-opaque.

	In fact, it can be shown that the set of effective fully timed-opaque strategies for~$\A$~ is $\{ \{ u,a \} , \{ u,a,e\} , \{ u,a,f\} \}$; 
	therefore, $\{ u,a \}$ is the only minimal strategy, while $\{ u,a,e\} , \{ u,a,f\} $ are the two maximal strategies.
	In addition, $\{ u,f \}$ is an example of a strategy that is not effective, as $\locfinal$ is always unreachable, whether $\locpriv$ is visited or not.
\end{example}

\LongVersion{
\subsection{Complexity}
}

\begin{proposition}[complexity]\label{prop:complexity}
	One can compute the set of fully timed-opaque strategies over a TA~$\TA$
in \TEXPTIME.
\end{proposition}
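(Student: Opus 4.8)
The plan is to reduce the problem to repeated invocations of the full timed opacity decision procedure established in~\cite{ALMS23}, which (as recalled in the introduction) decides, for a single TA, whether $\PrivDurReach{\cdot} = \PubDurReach{\cdot}$ in \TEXPTIME. First I would observe that a strategy is entirely determined by the subset of controllable actions it retains: since every strategy~$\strategy$ satisfies $\UncontrollableActions \subseteq \strategy \subseteq \Actions$, the candidate strategies are exactly the sets $\UncontrollableActions \cup \Actions'$ for $\Actions' \subseteq \ControllableActions$. There are thus exactly $2^{|\ControllableActions|}$ strategies to consider, and — as there is no evident monotonicity of opacity under the removal of actions to exploit — this set can simply be enumerated explicitly.

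For each such strategy~$\strategy$, I would build the controlled automaton $\Control(\A, \strategy)$ following \cref{def:Control}; this requires only removing from~$\Edges$ the edges whose action lies outside~$\strategy$, so $\Control(\A, \strategy)$ has at most the size of~$\A$ and is produced in polynomial time. By \cref{definition:opacity} and the definition of a fully timed-opaque strategy, $\strategy$ belongs to the target set iff $\Control(\A, \strategy)$ is fully timed-opaque, which is precisely what the \TEXPTIME{} procedure of~\cite{ALMS23} decides. Running this test on every candidate and collecting those that succeed yields the full set of fully timed-opaque strategies; the output is of singly-exponential size and so its production is dominated by the computation itself.

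It then remains to account for the complexity. Let $n$ denote the input size, so that a single opacity test costs at most $2^{2^{2^{\mathrm{poly}(n)}}}$ and the number of strategies is at most $2^{|\ControllableActions|} \leq 2^n$. The total running time is therefore bounded by $2^n \cdot 2^{2^{2^{\mathrm{poly}(n)}}} = 2^{n + 2^{2^{\mathrm{poly}(n)}}}$, and since $n + 2^{2^{\mathrm{poly}(n)}} \leq 2^{2^{\mathrm{poly}(n)+1}}$ this is again of the form $2^{2^{2^{\mathrm{poly}'(n)}}}$ with $\mathrm{poly}'(n) = \mathrm{poly}(n)+1$; hence the whole computation stays in \TEXPTIME. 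The one point requiring care — which I would treat as the main (if modest) obstacle — is exactly this closure estimate: one must verify that running an \emph{exponential} number of \TEXPTIME{} subroutines does not escape \TEXPTIME, and the inequality above confirms this, since the singly-exponential multiplicative factor is absorbed into the innermost polynomial of the triple exponential.
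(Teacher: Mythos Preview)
Your proof is correct and follows essentially the same approach as the paper: enumerate the $2^{|\ControllableActions|}$ candidate strategies, apply the \TEXPTIME{} full timed opacity test of~\cite{ALMS23} to each controlled automaton, and observe that an exponential number of \TEXPTIME{} calls remains in \TEXPTIME. Your treatment is in fact more explicit than the paper's, which simply asserts the closure fact without the arithmetic you spell out.
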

\begin{proof}
The full timed opacity decision problem (\ie{} checking if a given TA is fully timed-opaque) is decidable for TAs in (at most) \TEXPTIME~\cite{ALMS23}.
Moreover, reachability of the final state can be decided in \PSPACE~\cite{AD94}.
Thus, for any given strategy, one can check in triple exponential time whether it is
(effective) fully timed-opaque. 

Computing the list of (effective) fully timed-opaque strategies can be done naively by 
testing each possible strategy one by one and keeping the ones that satisfy the property\LongVersion{ we want}.
As there is an exponential number of possible strategies and repeating exponentially many 
times a \TEXPTIME algorithm remains in \TEXPTIME, this algorithm is in 
\TEXPTIME.
\end{proof}

As a corollary of the above, the (effective) full timed opacity control emptiness problem is in \TEXPTIME as well.
More precisely, the above proof establishes that the complexity class of the 
(effective) full timed opacity control emptiness problem is the maximum between 
\PSPACE and the complexity of the full timed opacity problem. As the latter is 
\PSPACE-hard (being trivially harder than reachability), the two problems lie
in the same complexity class. 
From a theoretical point of view, one thus cannot do better than the naive enumeration 
approach described here to solve the control problem.

Finding the maximal (resp. minimal) strategies can be done slightly more efficiently by 
starting from the set with every (resp. no) controllable action and enumerating the
potential strategies by decreasing (resp. increasing) order as one could then 
potentially stop before full enumeration. In the worst case, this will however have the 
same complexity as the full enumeration.

\section{Implementation and experiments}\label{section:implementation}

\LongVersion{
\subsection{Implementation in \strategFTO{}}
}

We implemented our strategy generation in \strategFTO{}, an entirely automated open-source tool written in Java.\footnote{%
	Source code is \LongVersion{available }at \url{https://github.com/DylanMarinho/Controlling-TA}.
	Models and experiment results are available at \href{https://doi.org/10.5281/zenodo.7181848}{\nolinkurl{10.5281/zenodo.7181848}}.
}
Our tool iteratively constructs strategies, then checks full timed opacity\LongVersion{ following \cref{algo:synth-control}}\ACMVersion{ by computing the private and public execution times and by checking their equality}.

\newcommand{\displayAlgoSynthControl}{%
\begin{algorithm}[tb]

	$\strategies \assign \emptyset$
	
	\ForEach{$ s \subseteq \ControllableActions $}{

		$\strategy \assign s \cup \UncontrollableActions$
		
		\tcc{Compute execution times}
		$\lambda_1 \assign \PubDurReach{\Control(\A, \strategy)}$\label{algo:synth-control:PubDurReach}

		$\lambda_2 \assign \PrivDurReach{\Control(\A, \strategy)}$\label{algo:synth-control:PrivDurReach}

		\tcc{Check for full timed opacity}
		\lIf{$\lambda_1 = \lambda_2$}{
			$\strategies \assign \strategies \cup \set{\strategy}$
		}		
	}
	
	\Return{\strategies}
	
	\caption{$\synthControl(\A)$ Exhibit all timed-opaque strategies}
	\label{algo:synth-control}
\end{algorithm}
}
\LongVersion{
	\displayAlgoSynthControl{}
}

\LongVersion{We give our strategy synthesis algorithm in \cref{algo:synth-control}.}
The exhibition of these execution times ($\PubDurReach{\A}$ and $\PrivDurReach{\A}$\LongVersion{, \cref{algo:synth-control:PubDurReach}}) is done in our implementation by an automated model modification (following the procedure described in~\cite{ALMS23}, but which was not entirely automated in~\cite{ALMS23}) followed by a synthesis problem using a parametric extension of TAs~\cite{AHV93}.
The synthesis of the execution times itself is done by a call to an external tool---\imitator{}~3.3 \emph{``Cheese Caramel au beurre salé''}~\cite{Andre21}.%
\LongVersion{%

}%
\ACMVersion{ }
\strategFTO{} then checks whether both sets of execution times are equal; this is done by a call to another external tool---\PolyOp{}~1.2\footnote{\url{https://github.com/etienneandre/PolyOp}}, that performs polyhedral operations\LongVersion{ as a simple interface for the Parma polyhedra library}\ACMVersion{ using PPL}~\cite{BHZ08}.

\paragraph{Algorithms}
We implement not only the exhibition of all timed-opaque strategies (denoted by $\synthControl(\A)$\LongVersion{, in \cref{algo:synth-control}}), but also the following variants:
\begin{ienumeration}%
	\item $\synthMaxControl(\A)$: synthesize all maximal strategies for~$\A$;
	\item $\synthMinControl(\A)$: synthesize all minimal strategies;
	\item $\witnessMaxControl(\A)$: witness \emph{one} maximal strategy;
	\item $\witnessMinControl(\A)$: witness \emph{one} minimal strategy.
\end{ienumeration}%
We implemented these other algorithms by changing the exploration order of the strategies, and/or by triggering immediate termination upon the first exhibition of a strategy.

\paragraph{Input model}
The input TA model is given in the \imitator{} input syntax;
while we presented a restricted setting in this paper for sake of clarity, our implementation in \strategFTO{} is much more permissive, by allowing significant extensions of TAs with global (integer or Boolean) variables, multiple automata with synchronization, multi-rate clocks (including stopwatches), etc.

\subsection{Proof of concept benchmark}
\begin{figure*}[tb]
	\centering
	\scriptsize
	
	\scalebox{\ifdefined\VersionForArXiV .95\else .9\fi}{
		\begin{tikzpicture}[pta, scale=1, xscale=2.5, yscale=1.3]
			
			\node[location, initial] at (-2, -2) (initial) {$I$};
			
			\node[location] at (-1, -2) (welcome) {$W$};
			\node[invariant, above=of welcome] {$\clockx \leq 3$};
			
			\node[location] at (0, -2) (waitingPassword) {$WP$};
			\node[invariant, right=of waitingPassword] {$\clockx \leq 10$};
			
			\node[location] at (0, -3) (waitChoice) {$WC$};
			\node[invariant, right=of waitChoice] {$\clockx \leq 10$};
			
			\node[location] at (0, -4) (waitingAmount) {$WA$};
			\node[invariant, right=of waitingAmount, yshift=.5em] {$\clockx \leq 10$};
			
			\node[location] at (0, -5) (preparingWithdrawalNormal) {$PNW$};
			\node[invariant, right=of preparingWithdrawalNormal] {$\clockx \leq 15$};
			
			\node[location] at (-2, -5) (preparingWithdrawalQuick) {$PQW$};
			\node[invariant, right=of preparingWithdrawalQuick] {$\clockx \leq 15$};
			
			\node[location] at (-3.5, -5) (displayingBalance) {$DB$};
			\node[invariant, right=of displayingBalance] {$\clockx \leq 10$};
			
			\node[location] at (0, -6) (moneyAvailableNormal) {$MAN$}; %
			\node[invariant, right=of moneyAvailableNormal] {$\clockx \leq 20$};
			
			\node[location] at (-2, -6) (moneyAvailableQuick) {$MAQ$}; %
			\node[invariant, right=of moneyAvailableQuick] {$\clockx \leq 20$};
			
			\node[location] at (-1, -7) (otherOperation) {$OO$};
			\node[invariant, below=of otherOperation] {$\clockx \leq 10$};
			
			\node[location] at (0, -7.5) (terminating) {$T$};
			\node[invariant, right=of terminating] {$\clocky \leq 100$};
			
			\node[location] at (2, -8) (cancelling) {$C$};
			\node[invariant, right=of cancelling] {$\clocky \leq 100$};
			
			\node[location, final] at (-1, -8) (theEnd) {$E$};

			\path (initial) edge[] node[above]{$\ATMstart$} node[below]{$\clockx, \clocky \assign 0$} (welcome);
			
			\path (welcome) edge[] node[above, align=center]{$\clockx = 3$\\$\ATMaskPassword$} node[below]{$\clockx \assign 0$} (waitingPassword);
			
			\path (waitingPassword) edge[loop above] node[right, align=center]{$\styledisc{nbFP} < 3$\\$\ATMincorrectPassword$\\$\styledisc{nbFP}++$} (waitingPassword);
			
			\path (waitingPassword) edge[] node[left, align=center]{$\ATMcorrectPassword$\\$\clockx \assign 0$} (waitChoice);
			
			\path (waitingPassword) edge[bend left] node[sloped, align=center]{$\clockx = 10$} (cancelling);
			
			\path (waitChoice) edge[bend right] node[left, align=center]{$\ATMquickWithdrawal$\\$\clockx \assign 0$} (preparingWithdrawalQuick);
			
			\path (waitChoice) edge[] node[align=center]{$\ATMnormalWithdrawal$\\$\clockx \assign 0$} (waitingAmount);
			
			\path (waitChoice) edge[bend right] node[above left, align=center]{$\ATMrequestBalance$\\$\clockx \assign 0$} (displayingBalance);
			
			\path (waitingAmount) edge[loop left] node[align=center]{$\styledisc{nbFA} < 3$\\$\ATMincorrectAmount$\\$\styledisc{nbFA}++$} (waitingAmount);
			
			\path (waitingAmount) edge[] node[align=center]{$\ATMcorrectAmount$\\$\clockx \assign 0$} (preparingWithdrawalNormal);
			
			\path (waitingAmount) edge[out=0, in=90] node[sloped, align=center]{$\clockx = 12$} (cancelling);
			
			\path (preparingWithdrawalNormal) edge[] node[align=center]{$\clockx = 15$\\$\clockx \assign 0$} (moneyAvailableNormal);
			
			\path (preparingWithdrawalQuick) edge[] node[align=center]{$\clockx = 15$\\$\clockx \assign 0$} (moneyAvailableQuick);
			
			\path (moneyAvailableNormal) edge[] node[sloped, align=center]{$\clockx = 20$} (cancelling);
			
			\path (moneyAvailableNormal) edge[] node[sloped, above, xshift=1em]{$\ATMtakeCash$} node[sloped, below, xshift=2em,yshift=-1em]{$\clockx \assign 0$} (otherOperation);
			
			\path (moneyAvailableQuick) edge[] node[above, sloped, align=center]{$\clockx = 20$} (cancelling);
			
			\path (moneyAvailableQuick) edge[] node[sloped, below, align=center]{$\ATMtakeCash$} (theEnd);
			
			\path (displayingBalance) edge[bend right] node[sloped, above]{$\clockx = 10$} node[sloped, below]{$\clockx \assign 0$} (otherOperation);
			
			\path (displayingBalance) edge[bend right] node[below left, align=center]{$\ATMpressOK$} (theEnd);
			
			\path (otherOperation) edge[bend left] node[sloped,align=center]{$\clockx = 10$} (terminating);
			\path (otherOperation) edge[bend right] node[sloped, %
			]{$\ATMpressFinish$} (terminating);
			
			\path (otherOperation) edge[] node[sloped, align=center]{$\ATMrestart$} (waitChoice);
			
			\path (cancelling) edge[] node[align=center]{$\clocky = 100$\\$\ATMfinish$} (theEnd);
			
			\path (terminating) edge[] node[align=center]{$\clocky = 100$\\$\ATMfinish$} (theEnd);

			\node[] at (2.6, -2.6) (legend) {
				\begin{tabular}{l @{~:~} l}
					$C$ & cancelling\\
					$DB$ & displaying balance\\
					$E$ & end\\
					$I$ & initial\\
					$MAN$ & money available normal\\
					$MAQ$ & money available quick\\
					$OO$ & other operations\\
					$PNW$ & preparing normal withdrawal\\
					$PQW$ & preparing quick withdrawal\\
					$T$ & terminating\\
					$W$ & waiting\\
					$WA$ & waiting for amount\\
					$WC$ & waiting for choice\\
					$WP$ & waiting for password\\
				\end{tabular}
			};

		\end{tikzpicture}
	}
	
	\caption{ATM benchmark}
	\label{figure:ATM}
\end{figure*}
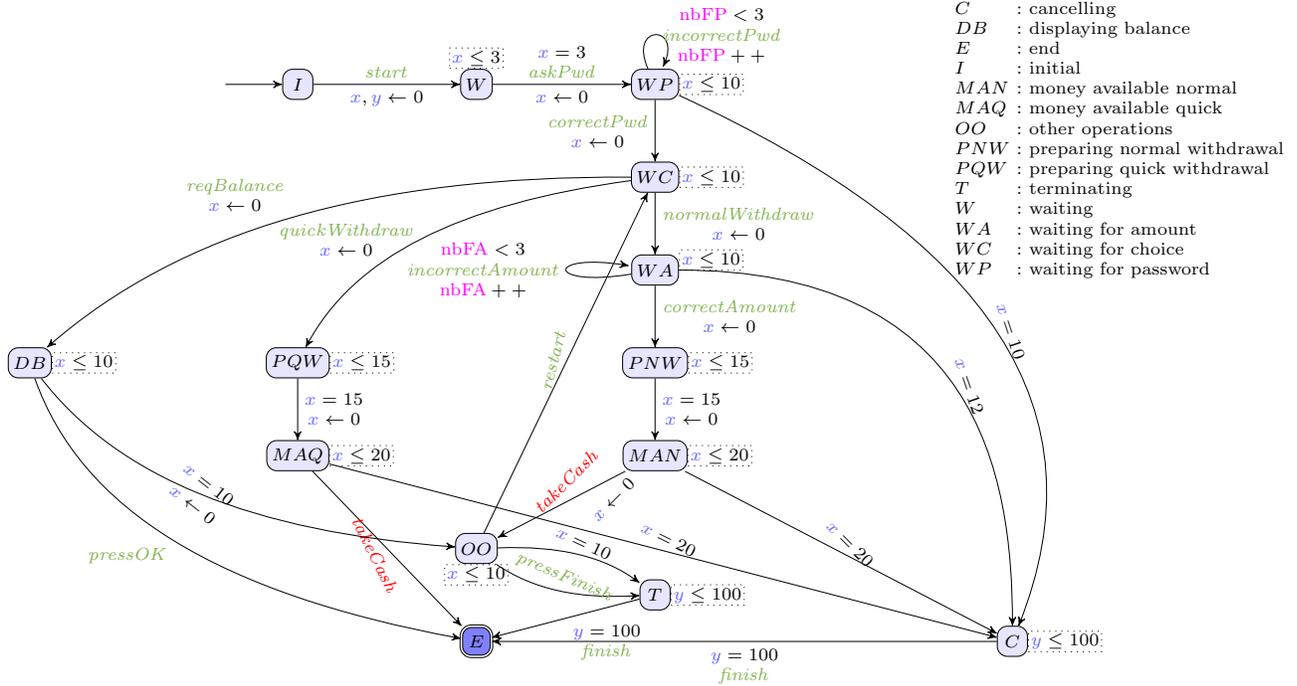
As a proof of concept, we consider the TA model of an ATM (given in \cref{figure:ATM}).
The idea is that (as per our definition of timed opacity) the attacker only has access to the execution time, \ie{} the time from the beginning of the program to reaching the end state.
The secret is whether the ATM user has actually obtained cash (action $\ATMtakeCash$).\footnote{%
	\strategFTO{} allows not only private locations, but also actions.
}
The TA uses two clocks: $\clockx$ for ``local'' actions, and $\clocky$ for a global time measurement.
First, the user starts the process (action \ATMstart), then the ATM displays a welcome screen for 3~time units, followed by another screen requesting the password (action \ATMaskPassword).
Then, the user can submit a correct (action \ATMcorrectPassword{}) or incorrect (\ATMincorrectPassword{}) password; if no password is input within 10 time units, the system moves to a cancelling phase.
The same happens if 3 incorrect password have been input.
After inputting the correct password, the user has the choice between a fixed-amount quick withdrawal (\ATMquickWithdrawal), a normal withdrawal (\ATMnormalWithdrawal) or a balance request (\ATMrequestBalance).

The quick withdrawal triggers a 15-time unit preparation followed by the availability of the money, which the user can take immediately (action \ATMtakeCash), thus terminating the procedure.
If the user does not take the money, the system moves to the cancelling phase.

The normal withdrawal asks the user to input the desired amount; similar to the password, after 3 wrong amounts (action \ATMincorrectAmount), or upon timeout, the system moves to cancelling phase.
After the user retrieves cash (action \ATMtakeCash), they are asked whether they would like to perform another operation; if so (action \ATMrestart), the system goes back to the choice location.
Otherwise (action \ATMpressFinish), or unless a 10-time unit timeout is reached, the system moves to the terminating location.
The balance request triggers the balance display, from which the user can immediately terminate the process (action \ATMpressOK), or go back to the choice menu.

The rationale is that, in the regular terminating and cancelling phases, the ATM terminates after constant time (invariant $\clocky \leq 100$), avoiding leaking information.
However, some actions may lead to quicker termination (quick withdrawal) or slower termination (multiple choices).

The uncontrollable actions are most of the user actions: \ATMcorrectAmount{}, \ATMincorrectAmount{}, \ATMcorrectPassword, \ATMincorrectPassword{}, \ATMpressFinish{}, \ATMtakeCash{}.
The controllable actions are the system actions (\ATMaskPassword{}, \ATMstart{}, \ATMfinish{}) and some of the users actions that can be controlled by disabling the associated choice (\ATMrequestBalance{}, \ATMpressOK{}, \ATMquickWithdrawal{}, \ATMrestart{}).

\subsection{Experiments}

We \LongVersion{first }exhibit in \cref{table:strategies} controllers for \LongVersion{our benchmark from \cref{figure:ATM}}\ACMVersion{our \cref{figure:ATM} benchmark} \LongVersion{as }computed by \strategFTO{}, for all \LongVersion{our }algorithms.
For space concern, we tabulate the actions to \emph{disable}; the strategy is therefore $\Actions$ minus these actions.
Also note that, for $\witnessMaxControl$ and $\witnessMinControl$, the \emph{order} in which we compute the subsets of~$\Actions$\LongVersion{ in \cref{algo:synth-control}} has an impact on the result, as the algorithm stops as soon as \emph{one} strategy is found.
According to \cref{table:strategies}, the maximal strategies (\ie{} the most permissive, disabling the least number of actions) are to disable either $\ATMrestart$ and~$\ATMpressOK$, or $\ATMrestart$ and~$\ATMrequestBalance$.
This is natural, as $\ATMrestart$ allows the user to restart a second operation, thus violating the constant-time nature of \cref{figure:ATM}, while $\ATMpressOK$ and $\ATMrequestBalance$, if enabled together, allow a quick exit, shorter than a cash withdrawal operation---thus giving hint to the attacker that the $\ATMtakeCash$ secret did \emph{not} occur.

\begin{table*}[tb]
\caption{Strategy synthesis for \cref{figure:ATM}}
\centering
\setlength{\tabcolsep}{1pt}%
\scriptsize%
\begin{tabular}{@{} l l l l l l}
	\rowHeader{} Actions to disable                               & $\synthMinControl$ & $\witnessMinControl$ & $\synthMaxControl$ & $\witnessMaxControl$ & $\synthControl$ \\
	\ATMrestart, \ATMpressOK                                      &                    &                             & \cellYes           & \cellYes                    & \cellYes           \\
	\ATMrestart, \ATMrequestBalance                                   &                    &                             & \cellYes           &                             & \cellYes           \\
	\ATMrestart, \ATMpressOK, \ATMquickWithdrawal                 &                    &                             &                    &                             & \cellYes           \\
	\ATMrestart, \ATMpressOK, \ATMrequestBalance                      &                    &                             &                    &                             & \cellYes           \\
	\ATMrestart, \ATMquickWithdrawal, \ATMrequestBalance              &                    &                             &                    &                             & \cellYes           \\
	\ATMrestart, \ATMpressOK, \ATMquickWithdrawal, \ATMrequestBalance & \cellYes           & \cellYes                    &                    &                             & \cellYes
\end{tabular}
\label{table:strategies}
\end{table*}
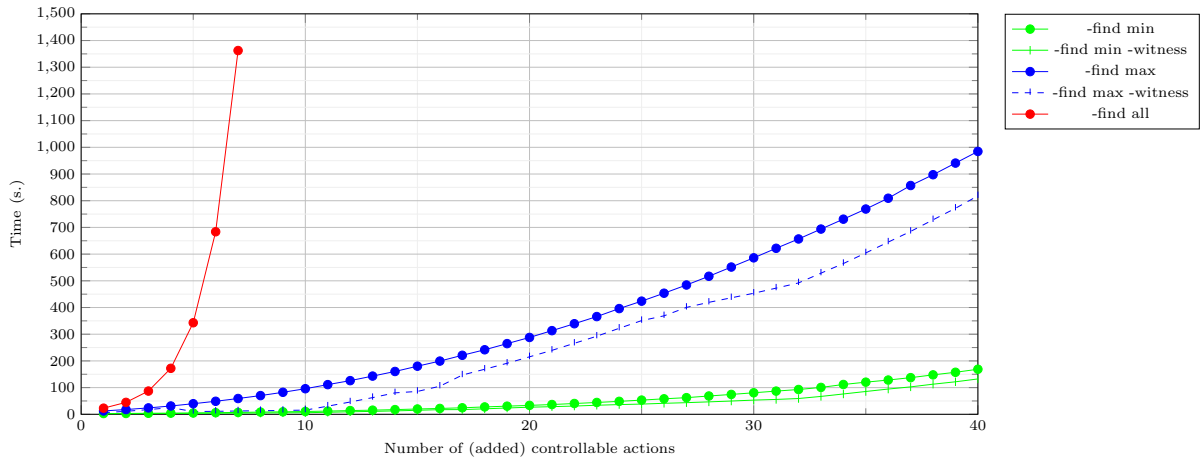
\begin{figure*}[tb]
	\centering
	\scriptsize
	\begin{tikzpicture}[scale=.8]
		\begin{axis}[
			xmin = 0, xmax = 40,
			ymin = 0, ymax = 1500,
			xtick distance = 10,
			ytick distance = 100,
			xlabel=Number of (added) controllable actions,
			ylabel=Time (s.),
			grid = both,
			minor tick num = 1,
			major grid style = {lightgray},
			minor grid style = {lightgray!25},
			width = \textwidth,
			height = 0.5\textwidth,
			legend pos = outer north east
			]

			\addplot[
				legend entry=-find min,
				color=green, mark=*,
			] table [x=Actions, y=-find min, col sep=comma] {results/result-graph-ATM-scalable-50a-1800TO.csv};

			\addplot[
			legend entry=-find min -witness,
			color=green,mark=|,
			] table [x=Actions, y=-find min -witness, col sep=comma] {results/result-graph-ATM-scalable-50a-1800TO.csv};

			\addplot[
			legend entry=-find max,
			color=blue, mark=*,
			] table [x=Actions, y=-find max, col sep=comma] {results/result-graph-ATM-scalable-50a-1800TO.csv};

			\addplot[
			legend entry=-find max -witness,
			color=blue,mark=|,dashed,
			] table [x=Actions, y=-find max -witness, col sep=comma] {results/result-graph-ATM-scalable-50a-1800TO.csv};

			\addplot[
			legend entry=-find all,
			color=red,mark=*,
			] table [x=Actions, y=-find all, col sep=comma] {results/result-graph-ATM-scalable-50a-1800TO.csv};
		\end{axis}
	\end{tikzpicture}

	\caption{Execution times for scalability (in seconds; TO set at 1,800\,s)}
	\label{fig:plot-scalability}
\end{figure*}

\paragraph{Scalability}
Then, we test the scalability of \strategFTO{} \wrt{} the number of actions.
We modify \cref{figure:ATM} by adding an increasingly large numbers of controllable actions; these actions do not play a role in the control (we basically add unguarded self-loops) but they will impact the computation time, as we will need to consider an increasingly (and exponentially) larger number of subsets of actions\LongVersion{, from \cref{algo:synth-control}}.
We add from 1 to~40 such actions, resulting (by adding the actions in \cref{figure:ATM}) in a model with a number of controllable actions from 11 to~50.
We plot these results in \cref{fig:plot-scalability}.
\LongVersion{(Raw results are in \cref{table:result-scalability} in \cref{appendix:scalability}.)}
From our results in \cref{fig:plot-scalability}, we see that, without surprise, the execution time for $\synthControl$ is exponential in the number of actions.
However, $\synthMaxControl$ and $\synthMinControl$ behave much better, by remaining respectively below 15~minutes and three~minutes, even for up to~50 controllable actions.
In addition, it is important to notice that $\witnessMaxControl$ and $\witnessMinControl$ do not decrease the time very much compared to the full versions $\synthMaxControl$ and $\synthMinControl$. This is because, at a given size, the number of strategies to be tested remains relatively small.

\section{Conclusion}\label{section:conclusion}

We introduced a prototype tool \strategFTO{} implementing an algorithm to exhibit strategies to guarantee the full timed opacity of a system modeled by a timed automaton where the attacker only has access to the computation time.
Even though relying on a simple enumeration of the subsets, our tool \strategFTO{} shows good performance for synthesizing maximal or minimal strategies, with very reasonable times, even for several dozens of controllable actions.

\paragraph{Future works}
We plan to further optimize our implementation by maintaining a set of non-effective strategies, \ie{} for which $\locfinal$ is unreachable:
any strategy strictly included into a known non-effective strategy will necessarily be non-effective too, and therefore no full timed-opacity analysis is needed for this strategy.
An option to efficiently represent this strategies set could be to store it using~BDDs.
	We also plan to strengthen strategies so that their choice may depend on how long has passed since the start of the execution.
	As these strategies still need a finite representation to be handled, this requires establishing exactly what strategies need to remember to chose optimally.

Our ultimate goal will be to extend timed automata to \emph{parametric} timed automata~\cite{AHV93}, and use automated parameter synthesis techniques (\eg{} \cite{JLR15,AAPP21,AMP21}), with a parametric timed controller~\cite{JLR19,Gol21}.

\ifdefined\VersionForArXiV\else
	\begin{acks}
		\ouracks{}
	\end{acks}
\fi

\newcommand{\CCIS}{Communications in Computer and Information Science}
\newcommand{\ENTCS}{Electronic Notes in Theoretical Computer Science}
\newcommand{\FI}{Fundamenta Informormaticae}
\newcommand{\FMSD}{Formal Methods in System Design}
\newcommand{\IJFCS}{International Journal of Foundations of Computer Science}
\newcommand{\IJSSE}{International Journal of Secure Software Engineering}
\newcommand{\IPL}{Information Processing Letters}
\newcommand{\JLAP}{Journal of Logic and Algebraic Programming}
\newcommand{\JLC}{Journal of Logic and Computation}
\newcommand{\LMCS}{Logical Methods in Computer Science}
\newcommand{\LNCS}{Lecture Notes in Computer Science}
\newcommand{\RESS}{Reliability Engineering \& System Safety}
\newcommand{\STTT}{International Journal on Software Tools for Technology Transfer}
\newcommand{\TOSEM}{ACM Transactions on Software Engineering and Methodology}
\newcommand{\TCS}{Theoretical Computer Science}
\newcommand{\ToPNoC}{Transactions on Petri Nets and Other Models of Concurrency}
\newcommand{\TSE}{IEEE Transactions on Software Engineering}

\ifdefined\VersionForArXiV

	\renewcommand*{\bibfont}{\footnotesize}
	\printbibliography[title={References}]
\else
	\bibliographystyle{ACM-Reference-Format}
	\bibliography{PTA.bib}
\fi

\ifdefined\VersionForArXiV

\newpage
\appendix

\begin{center}
	\bfseries\huge
	Appendix
\end{center}

\tikzset{
	property/.style={
		fill=orange!30,
		rounded corners
	},
	workTA/.style={
		fill=green!20,
		rounded corners
	},
	imitator/.style={
	draw,circle
	},
	modelToIMITATOR/.style={
	},
	pubPropToIMITATOR/.style={
		orange
	},
	privPropToIMITATOR/.style={
		orange
	},
	imitatorRes/.style={
		draw,fill=gray,circle
	},
	imitatorExec/.style={
		dashed
	},
	polyop/.style={
		draw,circle
	},
	resToPolyop/.style={
	},
	polyopRes/.style={
		draw,fill=gray,circle
	},
	polyopExec/.style={
		dashed
	},
	answer/.style={
		draw
	},
	extractStrat/.style={
	},
}

\section{Experiments: scalability test}\label{appendix:scalability}

We performed a sample scalability test on our benchmark.
The plot is given in \cref{fig:plot-scalability}.

We tabulate our full results in \cref{table:result-scalability}.

\begin{table*}[!hbt]
	\caption{Execution times for scalability (in seconds; TO set at 1,800\,s)}
	\scriptsize
	\centering
	\resizebox{.7\textwidth}{!}{
	\setlength{\tabcolsep}{2pt} %
	\begin{tabular}{|l|l|l|l|l|l|}
		\rowHeader{} Number of & $\synthMinControl$ & $\witnessMinControl$        & $\synthMaxControl$ & $\witnessMaxControl$        & $\synthControl$    \\
		\rowHeader{} added actions & \texttt{-find min} & \texttt{-find min -witness} & \texttt{-find max} & \texttt{-find max -witness} & \texttt{-find all} \\ \hline
		1 & 2.89 & 2.04 & 12.61 & 5.92 & 22.98 \\ \hline
		2 & 3.19 & 2.44 & 17.81 & 11.30 & 44.68 \\ \hline
		3 & 3.84 & 2.74 & 23.99 & 17.58 & 87.07 \\ \hline
		4 & 4.43 & 2.85 & 31.15 & 24.92 & 172.26 \\ \hline
		5 & 4.90 & 3.77 & 39.69 & 10.34 & 342.95 \\ \hline
		6 & 6.07 & 4.09 & 48.78 & 11.12 & 683.77 \\ \hline
		7 & 7.02 & 4.54 & 59.14 & 12.35 & 1,362.48 \\ \hline
		8 & 8.34 & 4.69 & 70.09 & 13.46 & TO \\ \hline
		9 & 9.32 & 5.63 & 82.45 & 14.52 & TO \\ \hline
		10 & 10.51 & 5.91 & 95.86 & 15.65 & TO \\ \hline
		11 & 12.04 & 7.66 & 111.16 & 30.49 & TO \\ \hline
		12 & 13.99 & 9.43 & 126.11 & 46.00 & TO \\ \hline
		13 & 15.54 & 10.94 & 143.15 & 62.50 & TO \\ \hline
		14 & 17.58 & 12.83 & 160.41 & 80.32 & TO \\ \hline
		15 & 19.88 & 15.03 & 180.24 & 85.64 & TO \\ \hline
		16 & 21.94 & 17.38 & 199.34 & 105.15 & TO \\ \hline
		17 & 24.39 & 17.63 & 221.04 & 146.98 & TO \\ \hline
		18 & 27.64 & 20.53 & 241.70 & 168.79 & TO \\ \hline
		19 & 30.49 & 23.65 & 264.72 & 191.16 & TO \\ \hline
		20 & 33.43 & 26.59 & 287.85 & 215.33 & TO \\ \hline
		21 & 36.58 & 28.60 & 313.34 & 239.52 & TO \\ \hline
		22 & 40.46 & 30.92 & 339.24 & 265.90 & TO \\ \hline
		23 & 44.31 & 33.92 & 366.07 & 292.51 & TO \\ \hline
		24 & 48.52 & 36.43 & 395.95 & 322.16 & TO \\ \hline
		25 & 53.21 & 38.30 & 423.86 & 350.92 & TO \\ \hline
		26 & 58.02 & 41.21 & 453.59 & 368.86 & TO \\ \hline
		27 & 62.36 & 43.57 & 484.28 & 400.39 & TO \\ \hline
		28 & 68.56 & 46.60 & 517.03 & 419.43 & TO \\ \hline
		29 & 74.39 & 49.74 & 551.58 & 436.35 & TO \\ \hline
		30 & 80.64 & 53.15 & 586.03 & 453.37 & TO \\ \hline
		31 & 86.89 & 55.72 & 621.83 & 472.63 & TO \\ \hline
		32 & 92.91 & 59.14 & 656.75 & 492.10 & TO \\ \hline
		33 & 100.67 & 67.00 & 693.82 & 528.71 & TO \\ \hline
		34 & 111.66 & 76.17 & 730.93 & 564.68 & TO \\ \hline
		35 & 120.37 & 85.48 & 768.87 & 604.42 & TO \\ \hline
		36 & 128.35 & 94.59 & 809.36 & 645.01 & TO \\ \hline
		37 & 137.28 & 102.77 & 856.98 & 685.03 & TO \\ \hline
		38 & 147.68 & 112.83 & 897.39 & 728.48 & TO \\ \hline
		39 & 157.42 & 121.77 & 940.98 & 771.68 & TO \\ \hline
		40 & 168.74 & 132.45 & 984.65 & 818.25 & TO \\ \hline
	\end{tabular}
	}
	\label{table:result-scalability}
\end{table*}

\fi

\end{document}